\documentclass[conference]{IEEEtran}
\IEEEoverridecommandlockouts
\usepackage{cite}
\usepackage{amsmath,amssymb,amsfonts}
\usepackage{algorithm}
\usepackage{amsthm}
\usepackage{algorithmic}
\usepackage{graphicx}
\usepackage{textcomp}
\usepackage[table]{xcolor}
\usepackage{booktabs} % 表格美化
\usepackage{threeparttable}
\usepackage{tabularx}
\usepackage{multirow}
\usepackage{pgfplots}
\usepackage{adjustbox}
\usepackage[table]{xcolor}  % 或者 \usepackage{colortbl}
\pgfplotsset{compat=1.17}
\usepgfplotslibrary{groupplots}
\newtheorem{theorem}{Theorem}
\newtheorem{proposition}{Proposition}

\newcommand{\NA}{\textemdash}
\newcolumntype{Y}{>{\centering\arraybackslash}X}
\usepackage[letterpaper, top=0.75in, bottom=1in, left=0.625in, right=0.625in]{geometry}
\begin{document}

\title{HintKD: Hint Knowledge Distillation for Bandwidth-Constrained Cloud-Edge Inference}

\author{Wanling~Luo, Jiayi~Zhang,~\IEEEmembership{Senior Member,~IEEE}, Bokai~Xu, and Ziheng Liu
\thanks{W. Luo, J. Zhang, B. Xu, and Z. Liu are with State Key Laboratory of Advanced Rail Autonomous Operation, and also with School of Electronic and Information Engineering, Beijing Jiaotong University. (Corresponding authors: Jiayi Zhang; Bokai Xu.)}
}
\IEEEaftertitletext{\vspace{-1.5\baselineskip}}
%\IEEEaftertitletext{\vspace{-2\baselineskip}}  % 压缩标题块与后面内容的空白
\maketitle

\begin{abstract}

Collaborative inference between the edge cloud and user equipment (UE) is a promising paradigm for deploying large deep neural networks (DNNs) in 6G networks. However, existing cloud-edge inference and distillation schemes often require the real-time transmission of high-dimensional intermediate features or soft probability vectors, which imposes a substantial communication burden on bandwidth-limited wireless links. To address this challenge, we propose HintKD, a bandwidth-constrained distillation framework for cloud-edge inference. The core idea is to compress the teacher’s guidance into compact discrete hints rather than transmit raw features directly. Specifically, a cloud-side teacher is first trained with a maximum conditional mutual information (MCMI) objective to preserve informative intra-class variations. Its latent representation is then mapped to a compact codebook through differentiable vector quantization. During deployment, the cloud transmits only a hint index, and the UE uses a lightweight FiLM-based adapter to modulate student features according to the received codeword. Experiments on the DeepSense 6G beam prediction task show that HintKD preserves competitive accuracy while reducing the inference-side payload from high-dimensional feature or logit transmission by $\mathcal{O}(D_t) / \mathcal{O}(1)$ or $\mathcal{O}(C)/\mathcal{O}(1)$ times. In addition, experiments on the MNIST dataset further verify the generality of the proposed distillation method across different datasets and tasks. These results show that HintKD achieves a favorable accuracy-bandwidth trade-off by replacing high-dimensional feature or logit transmission with a compact discrete hint index.
\end{abstract}

% \begin{IEEEkeywords}
% 6G edge intelligence, knowledge distillation, bandwidth-constrained inference, vector quantization, mutual information
% \end{IEEEkeywords}

\section{Introduction}
\vspace{-0.2cm}
\begin{figure*}[t]
    \centering
    \includegraphics[width=0.6\linewidth]{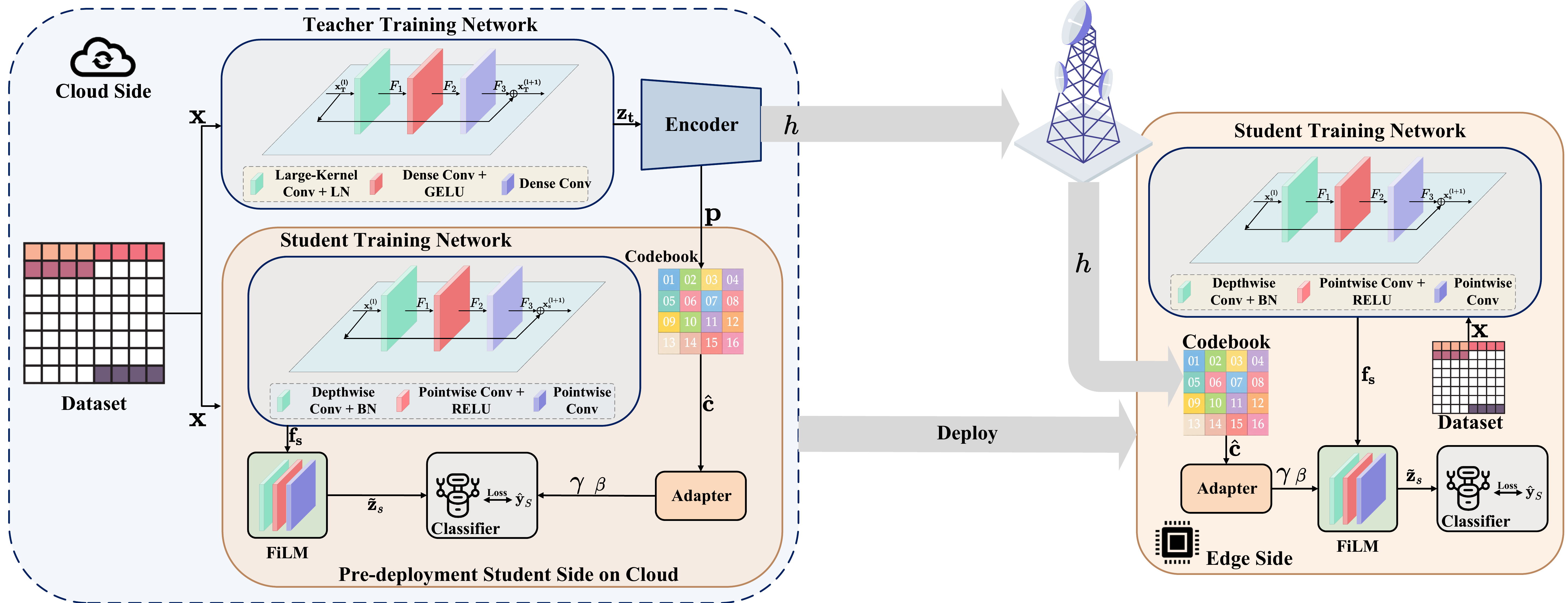} % 替换为你的 MNIST Top-5 图片文件名
    \caption{Overall architecture of the proposed HintKD framework.}
    \label{fig:figure}
\end{figure*}
With the rapid proliferation of the Internet of Things (IoT) and the advent of 6G networks, deploying artificial intelligence (AI) at the network edge has become increasingly important\cite{letaief2019roadmap, zhou2019edge, 11058968}. However, user equipment (UE) is typically constrained by limited battery capacity, memory, and computing resources, which makes it difficult to deploy large-scale deep neural networks (DNNs) directly on end devices\cite{letaief2019roadmap, zhou2019edge}. To address this challenge, existing research has mainly followed two directions: cloud-edge collaboration and on-device intelligence through model compression. Recent 6G network designs increasingly integrate edge-device collaboration with native artificial intelligence capabilities \cite{wang2025edgedevice}.

Cloud-edge collaboration reduces the computational burden on UEs by allowing resource-intensive processing to be supported by the cloud or edge server \cite{matsubara2022split, eshratifar2019bottlenet}. In this paradigm, the UE performs only lightweight local computation, while the remaining inference process relies on cloud-side assistance. Although such collaboration can effectively exploit powerful infrastructure, it also creates a strong dependence on wireless connectivity. In particular, when intermediate information must be exchanged during inference, system performance becomes highly sensitive to round-trip latency, channel quality, and uplink bandwidth \cite{xie2021deep, bourtsoulatze2019deep, matsubara2022split, eshratifar2019bottlenet}. As a result, the quality of service degrade sharply in poor channel conditions or coverage holes. At the physical layer, recent studies have investigated fluid-antenna-assisted integrated sensing and communications for unsourced random access as promising mechanisms for exploiting spatial diversity, supporting massive connectivity in future wireless networks \cite{zhang2026fluidisac,zhang2025slowfama}.

Compared with pure cloud-assisted inference, knowledge distillation (KD) offers a more attractive way to endow UEs with autonomous inference capability. In standard KD, a lightweight student model learns from a stronger teacher model by matching its output distribution \cite{hinton2015distilling}. Subsequent studies extended this idea from logits to richer intermediate representations, including feature hints, attention maps, and correlation-based relations, in order to improve the student’s generalization ability \cite{romero2015fitnets, zagoruyko2017paying, peng2019correlation}. These advances have made KD an effective framework for transferring knowledge from large models to compact edge models.

However, most existing KD methods are developed under offline or local settings, where the teacher and student are co-located or where communication is not a limiting factor. In a realistic wireless edge architecture, by contrast, the teacher is naturally deployed on the cloud or base station side, while the student runs on the UE. This physical separation fundamentally changes the distillation problem: if the teacher’s guidance is transmitted as high-dimensional features or soft logits, the communication overhead can become prohibitive for bandwidth-constrained wireless links \cite{tian2020contrastive}. In particular, feature-based transmission scales with the teacher feature dimension, and logits-based transmission scales with the number of classes, making both approaches inefficient for real-time cloud-edge inference. Related studies on deep CSI compression and feedback further demonstrate the importance of reducing the dimensionality of transmitted wireless representations in bandwidth-limited 6G systems \cite{lu2026csicompression}.

To address this problem, we propose HintKD, a bandwidth-constrained distillation framework for cloud-edge inference. Instead of transmitting raw teacher features or full logit vectors, HintKD compresses the teacher’s guidance into a compact discrete hint. More specifically, a cloud-side teacher is first trained with a maximum conditional mutual information (MCMI) objective so that its representation preserves informative intra-class variation. The resulting latent feature is then mapped into a compact codebook through differentiable vector quantization. During deployment, the cloud transmits only the hint index, and the UE uses a lightweight FiLM-based adaptation module to modulate local student features according to the received codeword. In this way, remote teacher guidance is converted into a low-bit control signal suitable for bandwidth-limited edge intelligence.

The main contributions of this paper are summarized as follows. First, we develop a cloud-edge distillation framework that replaces high-dimensional teacher transmission with compact discrete hint signaling. Second, we integrate MCMI-based teacher learning, differentiable codebook design, and FiLM-based student adaptation into a unified architecture for bandwidth-constrained inference. Third, we analyze the communication complexity of the proposed framework and show that the inference-side payload is reduced from feature- or logit-level real-valued transmission to a low-bit discrete index. Finally, experiments on the DeepSense 6G beam prediction task demonstrate that HintKD preserves competitive predictive accuracy while achieving a more favorable accuracy-bandwidth trade-off than conventional Feature KD and Logits KD baselines, and additional experiments on MNIST further verify the generality of the proposed method across different datasets and tasks.

\vspace{-0.2cm}
\section{HintKD Cloud-Edge Inference Framework}
\label{sec:method}
\vspace{-0.2cm}
In this section, we elaborate on the proposed HintKD framework, as illustrated in the system architecture Fig. \ref{fig:figure}.
\vspace{-0.2cm}
\subsection{Training Formulation}
\vspace{-0.2cm}
Consider a generic classification task with a dataset $\mathcal{D} = \{(\mathbf{x}_i, y_i)\}_{i=1}^{N}$, where $\mathbf{x}_i \in \mathbb{R}^{D_{in}}$ represents the $i$-th input data vector with dimension $D_{in}$, and $y_i \in \{1, \dots, C\}$ denotes its corresponding scalar class label among $C$ distinct classes. We deploy a high-capacity Teacher model $T$ on the cloud and a lightweight Student model $S$ on the edge device.
We define the Teacher model as a composite function $T(\cdot) = g_T(f_T(\cdot))$, where:
$f_T: \mathbb{R}^{D_{in}} \to \mathbb{R}^{D_t}$ is the backbone that maps input $\mathbf{x}$ to a latent feature vector $\mathbf{z}_t$. Thus, for any input $\mathbf{x}$, we explicitly define the teacher feature as $\mathbf{z}_t = f_T(\mathbf{x}; \theta_T)$. And $g_T: \mathbb{R}^{D_t} \to \mathbb{R}^C$ is the classifier head that maps $\mathbf{z}_t$ to the probability logits.
The final prediction is $\hat{\mathbf{p}}_T = \text{Softmax}(g_T(\mathbf{z}_t))$.
To ensure the teacher captures comprehensive contextual information rather than merely fitting the hard labels, we adopt the maximum conditional mutual information (MCMI) principle \cite{ye2024bayes}. The training objective for the teacher is formulated as:
\vspace{-0.2cm}
\begin{equation}
\mathcal{L}_{T} = \frac{1}{B}\sum_{i=1}^{B} \bigl(-\log \hat{p}_{T,i}(y_i)\bigr) - \lambda I(X; \hat{Y}_T | Y),
\label{eq:teacher_loss}
\end{equation}
where \(\hat{p}_{T,i}(y_i)\) denotes the probability assigned by the teacher to the ground-truth label \(y_i\). The hyperparameter $\lambda > 0$ controls the trade-off between classification accuracy and information maximization. For the information-theoretic term, $X$, $Y$, and $\hat{Y}_T$ denote the random variables corresponding to the high-dimensional input, class labels, and prediction space, respectively. Maximizing the conditional mutual information $I(X; \hat{Y}_T | Y)$ encourages the teacher's prediction to contain more information about the input $X$ given the class $Y$ and forces the Teacher to retain intra-class environmental features within its output distribution, serving as a rich source for subsequent hint generation.  

Following \cite{ye2024bayes}, we approximate the intractable CMI term using its empirical estimator. For a mini-batch of size $B$, this is computed as the average Kullback-Leibler (KL) divergence between the teacher's output for an individual sample $\mathbf{x}_i$ and the class-conditional centroid $\mathbf{q}_{y_i}$:
\vspace{-0.2cm}
\begin{equation}
I(X; \hat{Y}_T | Y) \approx \frac{1}{B} \sum_{i=1}^{B} \text{KL}(\hat{\mathbf{p}}_T(\mathbf{x}_i) \| \mathbf{q}_{y_i}),
\label{eq:I}
\end{equation}

where $\mathbf{q}_{c} \in \mathbb{R}^C$ is the empirical class centroid for class $c$, calculated as $\mathbf{q}_{c} = \frac{1}{N_c}\sum_{j:\, y_j=c}\hat{\mathbf{p}}_T(\mathbf{x}_j), N_c = \bigl|\{j: y_j=c\}\bigr|$. By maximizing this divergence, the teacher is forced to learn diverse, sample-specific ``environmental features" distinct from the class prototype, providing richer guidance for the subsequent hint generation.

We first optimize the teacher's parameters $\boldsymbol{\theta}_T$ using the MCMI objective defined in (1). Let $f_T(\cdot; \boldsymbol{\theta}_T)$ denote the feature extraction part of the trained teacher. For any input $\mathbf{x}$, it produces a high-dimensional latent feature vector $\mathbf{z}_t = f_T(\mathbf{x}; \boldsymbol{\theta}_T) \in \mathbb{R}^{D_t}$, which serves as the rich source for the subsequent hint generation.

Subsequently, we freeze $\boldsymbol{\theta}_T$. The goal is to train the student to predict $\mathbf{y}$ given the input $\mathbf{x}$, but conditioned on a quantized version of the teacher's feature $\mathbf{z}_t$. The framework then jointly optimizes the Student's parameters $\boldsymbol{\theta}_S$ and the Hint mechanism's parameters $\boldsymbol{\psi}$. The goal is to minimize the prediction error of the student $S(\cdot; \boldsymbol{\theta}_S)$, conditioned on the discrete hints derived from the frozen teacher's feature $\mathbf{z}_t$. This is formulated as:
\vspace{-0.2cm}
\begin{equation}
\min_{\boldsymbol{\theta}_S, \boldsymbol{\psi}} \mathbb{E}_{(\mathbf{x}, \mathbf{y}) \sim \mathcal{D}} \left[ \mathcal{L}_{CE}\Big(\mathbf{y}, \hat{\mathbf{y}}_S\big(\mathbf{x} \mid \mathcal{Q}(\mathbf{z}_t; \boldsymbol{\psi}); \boldsymbol{\theta}_S\big)\Big) \right],
\label{eq:student_obj_general}
\end{equation}
where $\mathcal{Q}(\cdot; \boldsymbol{\psi})$ represents the quantization operation mapping the continuous $\mathbf{z}_t$ to discrete indices, $\mathbf{y} \in \left \{  0, 1 \right \}^C $ denotes the one-hot ground truth vector and $\hat{\mathbf{y}}_S \in  \mathbb{R}^C$ represents the student's predicted probability density vector. This formulation explicitly bridges the fixed teacher's knowledge with the student's inference logic\cite{xu2023resource}.
\vspace{-0.2cm}
\subsection{Cloud-Side: Contextual Vector Quantization}
\vspace{-0.2cm}
This subsection details the vector quantization mechanism $\mathcal{Q}(\mathbf{z}_t; \boldsymbol{\psi})$ introduced in (\ref{eq:student_obj_general}) designed for the cloud-side teacher. It takes the continuous teacher feature $\mathbf{z}_t$ as input and outputs a discrete index. Although jointly optimized with the student during the offline training phase, this module is specifically engineered to compress the teacher's high-dimensional features into compact codes for bandwidth-constrained transmission during deployment.
To enable bandwidth-limited transmission, we map $\mathbf{z}_t$ into a discrete codebook $\mathcal{C} = \{\mathbf{e}_k\}_{k=1}^K$, where $\mathbf{e}_k \in \mathbb{R}^{D_h}$ represents the $k$-th latent prototype vector, and $K$ is the codebook size, inspired by vector quantized variational autoencoders (VQ-VAE)\cite{van2017neural}.

During training, we employ the Gumbel-Softmax relaxation to bridge the gradient flow. The soft probability vector $\mathbf{p} = [p_1, \dots, p_K]^\top \in \Delta^{K-1}$ is computed as:
\vspace{-0.2cm}
\begin{equation}
\label{eq:gumbel}
p_k = \frac{\exp((\ell_k + g_k) / \tau)}{\sum_{j=1}^K \exp((\ell_j + g_j) / \tau)},
\end{equation}
where $\boldsymbol{\ell} = f_{\mathrm{enc}}(\mathbf{z}_t;\boldsymbol{\psi}_{e}) :\mathbb{R}^{D_t} \to \mathbb{R}^{D_h}$ is a projection mapping that aligns the teacher feature dimension with the codebook, $g_k \sim \text{Gumbel}(0, 1)$ represents independent noise samples, and $\tau$ is the temperature parameter.

\begin{algorithm}[t]
\caption{Training and Inference Procedure of the HintKD Framework}
\small
\label{alg:hintkd}
\begin{algorithmic}[1]
\REQUIRE Dataset $\mathcal{D}$; teacher feature extractor $f_T$ and head $g_T$; Student Backbone $S$, Encoder $f_{enc}$, Codebook $\mathcal{C} = \{\mathbf{e}_k\}_{k=1}^{K}$, Adapter $(\Phi_\gamma, \Phi_\beta)$, Temperature $\tau$, MCMI weight $\lambda$.

\STATE \textbf{Phase 1: Teacher Pre-training}
\WHILE{not converged}
    \STATE Sample mini-batch $B = \left \{ (\mathbf{x_i}, y_i) \right \}^B_{i=1}s  \sim \mathcal{D}$ 
    \STATE Compute $\mathbf{z}_T = f_T(\mathbf{x})$, and $\hat{\mathbf{p}}_T = $ softmax $(g_T(\mathbf{z}_T))$
    \STATE Estimate $I(X; \hat{Y}_T | Y)$ using (\ref{eq:teacher_loss})
    \STATE Update $\theta_T$ by minimizing $\mathcal{L}_T$ in (\ref{eq:I})
\ENDWHILE
\STATE \textbf{Freeze} $\theta_T$.

\STATE \textbf{Phase 2: bandwidth-constrained distillation}
\WHILE{not converged}
    \STATE Sample mini-batch $B = \left \{ (\mathbf{x_i}, y_i) \right \}^B_{i=1}  \sim \mathcal{D}$
    
    \STATE Compute frozen teacher feature: $\mathbf{z}_t = f_T(\mathbf{x})$
    \STATE Compute encoder logits $ \boldsymbol{\ell} = f_{enc}(\mathbf{z}_t)$
    \STATE Obtain relaxed assignment $\mathbf{p}$ using (\ref{eq:gumbel})
    \STATE Modulate feature in (\ref{eq:film_compound})
    \STATE Compute student prediction $\hat{\mathbf{p}}_S$ = Softmax $(g_s(\tilde{z}_s))$
    \STATE Update $\theta_S, \psi_e, \mathcal{C}$, and $\phi$ by minimizing $\mathcal{L}_S$ in (\ref{eq:Loss_S})
\ENDWHILE

\STATE \textbf{Phase 3: Deployment}
\STATE \textbf{Cloud:} Compute $\boldsymbol{\ell} = f_{enc}(f_T(\mathbf{x}))$ and $h = \arg\max_k(\boldsymbol{\ell}_k))$
\STATE \textbf{Transmission:} Send only index $h$ ($\lceil \log_2 K \rceil$ bits)
\STATE \textbf{Edge:} Retrieve $\mathbf{e}_{h}$ and produce the final prediction
\end{algorithmic}
\end{algorithm}

During deployment, we perform hard quantization by selecting the index of the mode, i.e., $h = \operatorname*{arg\,max}_k \boldsymbol{\ell}_k$. This ensures that the transmission payload is strictly bounded by the codebook entropy $\lceil \log_2 K \rceil$ bits, achieving a constant communication overhead $\mathcal{O}(1)$ regardless of the feature dimension $D_t$. In stark contrast, traditional Feature KD requires transmitting the raw latent vector with a complexity of $\mathcal{O}(D_t)$, while Logits KD scales with the number of classes as $\mathcal{O}(C)$. Since typically 
$\mathcal{O}(1) < \mathcal{O}(C) < \mathcal{O}(D_t)$
, our method establishes an exponential reduction in transmission payload.

It is crucial to distinguish the transmission payload between phases. 
During the offline joint training, to preserve differentiability for back-propagation, the Cloud-side transmits the soft probability vector $\mathbf{p} \in \mathbb{R}^K$ (4) rather than the discrete index. Even in this high-precision mode, our method remains efficient. Since the codebook size $K$ is typically smaller than the feature dimension $D_t$ used in Feature KD or the class number $C$ in Logits KD, HintKD reduces the training communication overhead by $\frac{D_t}{K} \times$ and $\frac{C}{K} \times$, respectively. 
During online inference, the transmission switches to the hard index $h$, achieving the ultimate bandwidth compression.
\vspace{-0.2cm}
\subsection{Edge-Side: Hint Adaptation via FiLM}
\vspace{-0.2cm}
This subsection describes the lightweight adaptation mechanism deployed on the UE. Its role is to interpret the received hint indices and modulate the student network's inference logic locally.

The UE receives the hint index $h$ and retrieves the corresponding vector $\mathbf{e}_h$ from the local codebook. To integrate this global guidance into the local inference, we formulate the student's processing as a conditionally modulated transformation \cite{perez2017film}. We unify the feature extraction and parameter generation into a single composite function. Let $\mathbf{x} \in \mathbb{R}^{C_{in} \times H \times W}$ denote the raw channel input. The hint-modulated representation $\tilde{\mathbf{z}}_s \in \mathbb{R}^{D_s}$ is defined as:
\vspace{-0.2cm}
\begin{equation}
\tilde{\mathbf{z}}_s(\mathbf{x}, \mathbf{e}_h) = \underbrace{(\mathbf{1} + \Phi_\gamma(\mathbf{e}_h; \phi))}_{\text{Scale Term}} \odot \underbrace{S(\mathbf{x}; \theta_S)}_{\text{Raw Feature}} + \underbrace{\Phi_\beta(\mathbf{e}_h; \phi)}_{\text{Shift Term}},
\label{eq:film_compound}
\end{equation}
where $S(\cdot; \theta_S)$ is the student backbone network parameterized by weights $\theta_S$, which maps the raw input $\mathbf{x}$ to the local feature space $\mathbb{R}^{D_s}$. And $\Phi_\gamma, \Phi_\beta(\cdot; \phi)$ are the adapter projection heads parameterized by weights $\phi$, which map the discrete hint $\mathbf{e}_h$ to the modulation parameters $\boldsymbol{\gamma}, \boldsymbol{\beta} \in \mathbb{R}^{D_s}$. Crucially, (\ref{eq:film_compound}) preserves the dimensionality ($D_s$), allowing seamless integration with the classifier.
Finally, the modulated feature vector $\tilde{\mathbf{z}}_s \in \mathbb{R}^{D_s}$ is fed into the student's task-specific classifier head to obtain the final prediction probability vector $\hat{\mathbf{p}}_S \in \mathbb{R}^C$ and $\mathcal{L}_S$: 
\vspace{-0.2cm}
\begin{equation}
\hat{\mathbf{p}}_S = \text{Softmax}(g_s(\tilde{\mathbf{z}}_s)),
\label{eq:final_prediction}
\end{equation}
\vspace{-0.2cm}
\begin{equation}
    \mathcal{L}_{S}
=
\frac{1}{B}\sum_{i=1}^{B} \bigl(-\log \hat{p}_{S,i}(y_i)\bigr),
\label{eq:Loss_S}
\end{equation}
\noindent This prediction $\hat{\mathbf{p}}_S$ is then used to compute the cross-entropy loss against the ground truth labels. While $\hat{p}_S,i(y_i)$ denotes the probability assigned by the student. (\ref{eq:Loss_S}) clarifies the complete inference chain from the transmitted hint to the student's final decision.

The training procedure is summarized in Algorithm \ref{alg:hintkd}.
\vspace{-0.2cm}
\subsection{Theoretical Analysis}
\vspace{-0.2cm}
In this subsection, we analyze the generalization capability and communication complexity of the proposed framework.

\begin{theorem}[Hint-conditioned generalization bound]
\label{thm:generalization}
Let $\mathcal{H}$ be the hypothesis space of the student model. By conditioning on the discrete hints $\mathcal{C} = \{1, \dots, K\}$, the HintKD framework effectively partitions $\mathcal{H}$ into $K$ subspaces $\{\mathcal{H}_k\}_{k=1}^K$, where $\mathcal{H}_k$ corresponds to the student's function mapping conditioned on hint index $k$. Let $\hat{\mathcal{R}}_S(\mathcal{H})$ denote the empirical Rademacher complexity and $N$ denotes the number of training samples. The generalization error of the student, $\epsilon_{s}$, is bounded with probability $1-\delta$ by:
\vspace{-0.2cm}
\begin{equation}
\label{eq:rademacher_bound}
\epsilon_{s} \!\leq\! \hat{\epsilon}_{emp} \!+\! 2\sum_{k=1}^K \frac{n_k}{N} \hat{\mathcal{R}}_{S_k}(\mathcal{H}_k) \!+\! 3\sqrt{\frac{\log(2/\delta)}{2N}} \!+\! \Delta_{align},
\end{equation}
where $S_k$ is the subset of training samples assigned to hint $k$, $n_k = |S_k|$and $\Delta_{align}$ is the teacher-student alignment error.
\end{theorem}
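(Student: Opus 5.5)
The plan is to derive (\ref{eq:rademacher_bound}) from the standard Rademacher-complexity generalization bound for a loss bounded in $[0,1]$, combined with a decomposition over the hint partition. First I will fix the loss class. The student's prediction is $\hat{\mathbf{p}}_S=\text{Softmax}(g_s(\tilde{\mathbf{z}}_s(\mathbf{x},\mathbf{e}_h)))$ via (\ref{eq:film_compound}) and (\ref{eq:final_prediction}), so every $f\in\mathcal{H}$ is determined by the tuple $(f_1,\dots,f_K)$, with $f_k(\mathbf{x})=f(\mathbf{x},\mathbf{e}_k)\in\mathcal{H}_k$. Throughout I will assume the loss $\ell\in[0,1]$ (for instance the $0$-$1$ loss, or a clipped cross-entropy). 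I will also treat the hint assignment $h(\mathbf{x})=\arg\max_k \boldsymbol{\ell}_k$ as fixed. This is legitimate because $\theta_T$ is frozen and the encoder is treated as given when the student class is analyzed. On that basis I will invoke the classical result (Bartlett--Mendelson; Mohri et al.): with probability at least $1-\delta$, for all $f$,
\[
R(f)\le \hat{R}(f)+2\hat{\mathcal{R}}_S(\ell\circ\mathcal{H})+3\sqrt{\tfrac{\log(2/\delta)}{2N}}.
\]
This already supplies the first and third terms of (\ref{eq:rademacher_bound}).

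The second step is to decompose the empirical Rademacher complexity along the partition $S=\bigcup_k S_k$. Each $f\in\mathcal{H}$ acts on $S_k$ only through $f_k$, and the components $f_k$ can be chosen independently. The supremum therefore splits:
\[
\mathbb{E}_\sigma\sup_{f}\frac1N\sum_{i}\sigma_i\ell(f(\mathbf{x}_i),y_i)=\sum_{k=1}^K\frac{n_k}{N}\,\mathbb{E}_\sigma\sup_{f_k\in\mathcal{H}_k}\frac1{n_k}\sum_{i\in S_k}\sigma_i\ell(f_k(\mathbf{x}_i),y_i).
\]
The right-hand side is $\sum_k \frac{n_k}{N}\hat{\mathcal{R}}_{S_k}(\ell\circ\mathcal{H}_k)$. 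If the losses are kept inside the complexity, or if Talagrand's contraction lemma is applied with a Lipschitz surrogate and the constant is absorbed into $\hat{\mathcal{R}}_{S_k}(\mathcal{H}_k)$, this gives the weighted second term. If the components are not fully independent (shared $\theta_S$ across hints), the equality becomes an inequality ``$\le$'', because the supremum over a product set dominates the supremum over the coupled set. That inequality still points in the right direction.

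The third step produces $\Delta_{align}$. The bound above concerns the idealized setting in which the training and test hints come from the same fixed quantizer. In practice the student is trained with relaxed Gumbel-Softmax assignments $\mathbf{p}$ from (\ref{eq:gumbel}), but deployed with hard indices $h$ obtained from the teacher feature. I will define
\[
\Delta_{align}=\sup_f\bigl|\mathbb{E}[\ell(f(\mathbf{x},\mathbf{e}_{h(\mathbf{x})}),y)]-\mathbb{E}[\ell(f(\mathbf{x},\sum_k p_k\mathbf{e}_k),y)]\bigr|,
\]
which measures the discrepancy between the hint the student was fitted to and the hint it receives. A triangle inequality then adds this term to the bound.

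I expect the main obstacle to be the decomposition step. The difficulty is that the partition $\{S_k\}$ is data-dependent through the learned encoder $\boldsymbol{\psi}$, which is trained jointly on the same sample. A rigorous argument has two options. The first is to condition on the encoder, for example via sample splitting or by treating $\boldsymbol{\psi}$ as fixed. The second is a union bound over the encoder class, which would inflate the confidence term. I would first try conditioning on a frozen encoder and state this explicitly. I would fold any residual dependence into $\Delta_{align}$, so that the constants $2$ and $3$ stay exactly as in (\ref{eq:rademacher_bound}).
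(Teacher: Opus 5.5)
Your core argument is the paper's own: its sketch also starts from the standard Rademacher bound and replaces the global complexity by $\sum_k \frac{n_k}{N}\hat{\mathcal{R}}_{S_k}(\mathcal{H}_k)$. The paper only asserts that inequality, and its closing entropy remark (that the weighted sum is strictly below the global complexity) is not needed and in fact runs opposite to the inequality it displays. Your sup-of-a-sum argument with $\mathcal{H}\subseteq\prod_k\mathcal{H}_k$ is the correct justification. For a \emph{fixed} map $h$, the fact that $\{S_k\}$ depends on the sample through $h(\mathbf{x}_i)$ is harmless, since the decomposition holds deterministically for every sample. The paper never derives $\Delta_{align}$, and your treatment of it is where the proposal breaks.

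You define $\Delta_{align}$ as a \emph{population} gap between hard-hint and soft-hint risk. A triangle inequality cannot insert that quantity into the chain you built. Write $\hat R_{\mathrm{hard}}(f)$ and $\hat R_{\mathrm{soft}}(f)$ for the empirical risks under hard indices and under the realized relaxed assignments. If $\hat{\epsilon}_{emp}=\hat R_{\mathrm{hard}}(f)$, your first two steps already prove the bound and $\Delta_{align}$ does no work. If $\hat{\epsilon}_{emp}=\hat R_{\mathrm{soft}}(f)$, the loss actually minimized, you must control the \emph{empirical} gap $\hat R_{\mathrm{hard}}(f)-\hat R_{\mathrm{soft}}(f)$. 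Your population quantity does not bound it without a second uniform-deviation step, which brings its own complexity term and splits $\delta$.

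The other route, $R_{\mathrm{hard}}\le R_{\mathrm{soft}}+\Delta_{align}$ followed by a Rademacher bound on $R_{\mathrm{soft}}$, also fails. The model is then fed a continuum of random mixtures $\sum_k p_k\mathbf{e}_k$ rather than one of $K$ codewords, so there is no partition into $S_k$ and your decomposition step does not apply. The fix is to define the alignment term at the sample level:
\[
\Delta_{align}=\sup_{f\in\mathcal{H}}\frac{1}{N}\sum_{i=1}^{N}\Bigl[\ell\bigl(f(\mathbf{x}_i,\mathbf{e}_{h(\mathbf{x}_i)}),y_i\bigr)-\ell\bigl(f(\mathbf{x}_i,{\textstyle\sum_{k}p_{ik}\mathbf{e}_k}),y_i\bigr)\Bigr].
\]
Then $\hat R_{\mathrm{hard}}(f)\le\hat R_{\mathrm{soft}}(f)+\Delta_{align}$ holds deterministically, and the constants $2$ and $3$ survive.

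For the same reason, the sample-dependence of $h$ cannot be ``folded into $\Delta_{align}$.'' Freezing $\theta_T$ does not make $h$ independent of $\mathcal{D}$, because both the teacher (Phase 1) and the encoder $\boldsymbol{\psi}$ are fit on the same data. That dependence is not an additive bias. It invalidates the probability statement itself, which is uniform only over a class fixed before the sample is drawn.

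To keep the stated constants, you need an explicit hypothesis that $h$ is independent of the sample used in the bound, for example obtained by sample splitting. Seeking uniformity over encoders instead, via a union bound or by putting $\boldsymbol{\psi}$ into the class, changes the confidence term and generally destroys the clean per-partition form.
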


\begin{IEEEproof}
(Sketch) The standard generalization bound is governed by the global Rademacher complexity: 
\vspace{-0.2cm}
\begin{equation}
  \hat{\mathcal{R}}_S(\mathcal{H}) = \mathbb{E}_{\boldsymbol{\sigma}}[\sup_{f \in \mathcal{H}} \frac{1}{N} \sum_{i=1}^N \sigma_i f_{h_i}(x_i)] \le \sum_{k=1}^K \frac{n_k}{N}\hat{\mathcal{R}}_{S_k}(\mathcal{H}_k),
\end{equation}
In HintKD, the FiLM mechanism switches the student's operational mode based on the hint $h$. This decomposes the learning problem into $K$ conditioned sub-problems. Since the conditional distribution $P(Y|X, h=k)$ has lower entropy than the marginal $P(Y|X)$, the hypothesis subspace $\mathcal{H}_k$ required to fit the data subset $S_k$ is significantly less complex than the global space $\mathcal{H}$. Consequently, the weighted sum of local complexities $\sum p_k \hat{\mathcal{R}}_{S_k}(\mathcal{H}_k)$ is strictly strictly lower than the global complexity $\hat{\mathcal{R}}_S(\mathcal{H})$, leading to a tighter bound.
\end{IEEEproof}

\begin{proposition}[Per-sample communication complexity]
Assume that continuous values are transmitted in 32-bit floating-point format. For one sample, Feature KD transmits \(D_t\) real values, Logits KD transmits \(C\) real values, and HintKD transmits a \(K\)-dimensional soft assignment during training and one discrete index during inference. Hence
\vspace{-0.2cm}
\begin{equation}
\mathcal{C}_{\mathrm{Feat}}^{\mathrm{train}}
=
\mathcal{C}_{\mathrm{Feat}}^{\mathrm{infer}}
=
32D_t,
\qquad
\mathcal{C}_{\mathrm{Logits}}^{\mathrm{train}}
=
\mathcal{C}_{\mathrm{Logits}}^{\mathrm{infer}}
=
32C,
\label{eq:feat_logits_cost}
\end{equation}
\begin{equation}
\mathcal{C}_{\mathrm{Hint}}^{\mathrm{train}}
=
32K,
\qquad
\mathcal{C}_{\mathrm{Hint}}^{\mathrm{infer}}
=
\lceil \log_2 K \rceil .
\label{eq:hint_cost}
\end{equation}
Therefore, HintKD reduces the inference-side communication from real-valued payloads that scale linearly with \(D_t\) or \(C\) to an integer payload that scales only logarithmically with \(K\). In addition, HintKD is more communication-efficient than Feature KD during training whenever \(K<D_t\), and more efficient than Logits KD during training whenever \(K<C\).
\end{proposition}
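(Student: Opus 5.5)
The proof is a direct bit count under the stated encoding model, followed by elementary comparisons. We fix the convention that the per-sample payload equals the number of scalars sent times the bits per scalar. Continuous scalars cost $32$ bits each. A discrete symbol from an alphabet of size $K$ costs $\lceil \log_2 K\rceil$ bits under a fixed-length binary code.

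First we identify what crosses the link in each scheme, using the earlier description of the framework. Feature KD sends $\mathbf{z}_t\in\mathbb{R}^{D_t}$ in both phases, since the student needs the teacher feature both to distill and to infer. Logits KD sends $g_T(\mathbf{z}_t)\in\mathbb{R}^{C}$ (or its softmax) in both phases. HintKD during Phase 2 of Algorithm~\ref{alg:hintkd} sends the relaxed assignment $\mathbf{p}\in\Delta^{K-1}\subset\mathbb{R}^K$ from (\ref{eq:gumbel}). This is needed because the edge side forms the mixture $\sum_k p_k\mathbf{e}_k$ and must backpropagate through it. During Phase 3, HintKD sends only $h=\arg\max_k \ell_k\in\{1,\dots,K\}$. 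Multiplying these counts by the per-scalar cost gives (\ref{eq:feat_logits_cost}) and $\mathcal{C}_{\mathrm{Hint}}^{\mathrm{train}}=32K$.

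For the inference cost of HintKD, we argue both directions. Any injective binary code on $K$ symbols needs at least $\lceil\log_2 K\rceil$ bits in fixed length. The standard binary representation of $h-1$ achieves this bound. The edge decodes $h$ and looks up $\mathbf{e}_h$ in its local copy of $\mathcal{C}$, so no codebook vectors are transmitted. This gives (\ref{eq:hint_cost}).

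The comparisons then follow at once. Since $32D_t$ and $32C$ are linear in $D_t$ and $C$, while $\lceil\log_2K\rceil$ is independent of both, the inference claim holds. In fact $\lceil\log_2 K\rceil<32$ for every $K\le 2^{32}$, so HintKD is cheaper than sending even a single float. For training, $32K<32D_t\iff K<D_t$ and $32K<32C\iff K<C$.

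The only step needing care is modeling, not computation. We must justify that the codebook, adapter and student head are stored on the UE, so they count as a one-time deployment cost rather than a per-sample cost. We must also state that the soft vector $\mathbf{p}$ is sent uncompressed, with no exploitation of the simplex constraint (which would save one entry). We will state both as explicit assumptions consistent with Algorithm~\ref{alg:hintkd}, and note that channel coding overhead is excluded uniformly across all schemes.
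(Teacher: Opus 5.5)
Your proof is correct, and it takes a more careful route than the paper's.

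The paper does not derive the payloads at all; it reads them off the hypotheses. It then gets the comparisons by imposing a global ordering $K < C \ll D_t$, writes the single chain $\mathcal{C}_{\mathrm{Hint}}^{\mathrm{infer}} \ll \mathcal{C}_{\mathrm{Hint}}^{\mathrm{train}} < \mathcal{C}_{\mathrm{Logits}} \ll \mathcal{C}_{\mathrm{Feat}}$, and describes the inference gain as going from $\mathcal{O}(K)$ to $\mathcal{O}(\log K)$. That last comparison is HintKD's own training payload against its inference payload, not against $D_t$ or $C$.

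You instead justify each payload from the pipeline: Phase~2 versus Phase~3, the codebook held locally at the UE, and optimality of a fixed-length code on $K$ symbols. You compare the inference cost directly against $32D_t$ and $32C$, as the ``therefore'' sentence requires. You also prove the training claims in exactly the conditional form stated: $32K<32D_t \iff K<D_t$ and $32K<32C \iff K<C$.

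The paper's route buys a one-line lifecycle ordering, but only under an extra hypothesis. The proposition does not contain that hypothesis, and the paper's own experiments violate it. On MNIST, $K=16>C=10$, so $\mathcal{C}_{\mathrm{Hint}}^{\mathrm{train}}=512>320=\mathcal{C}_{\mathrm{Logits}}^{\mathrm{train}}$. The 6-bit Scenario~9 setting has $K=64>C=33$. Your conditional version stays valid in both cases.

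Two small points. First, your aside should say $\lceil\log_2 K\rceil<32$ for $K\le 2^{31}$; at $K=2^{32}$ it equals $32$. Nothing depends on it. Second, you motivate sending $\mathbf{p}$ by the need to backpropagate through it. So add to your explicit assumptions that only the forward cloud-to-UE payload is counted. The encoder $f_{\mathrm{enc}}$ sits on the cloud side, so updating $\boldsymbol{\psi}_e$ would otherwise also require returning a $K$-dimensional gradient over the link, which $32K$ does not include.
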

\begin{IEEEproof} 
In typical Edge AI scenarios, the dimensions satisfy the inequality $K < C \ll D_t$. Consequently, our framework establishes a strictly lower bound on communication overhead across the entire lifecycle:
$    \mathcal{C}_{Hint}^{infer} \ll \mathcal{C}_{Hint}^{train} < \mathcal{C}_{Logits} \ll \mathcal{C}_{Feat}$
Specifically, during deployment, HintKD achieves an exponential reduction in payload from linear complexity $\mathcal{O}(K)$ to logarithmic complexity $\mathcal{O}(\log K)$, rendering it the only feasible solution for ultra-low-bandwidth control channels. \hfill
\end{IEEEproof}

\begin{table}[t]
\centering
\caption{THE MODEL PARAMETER SETTINGS IN OUR EXPERIMENTS.}
\label{tab:parameters}
\renewcommand{\arraystretch}{1.05} % 增加行间距，提升阅读体验
\begin{tabular}{lcc}
\toprule
\textbf{Description} & \textbf{Symbol} & \textbf{Value} \\
\midrule
\multirow{2}{*}{Evaluated datasets} & \multirow{2}{*}{-} & DeepSense 6G (Scen. 9) \\ & & MNIST \\
\multirow{2}{*}{Primary tasks} & \multirow{2}{*}{-} & Beam Prediction \\ & & Image Classification \\
\multirow{2}{*}{Input dimensions} & \multirow{2}{*}{$C \times H \times W$} & $4 \times 256 \times 128$ (Radar) \\
 & & $1 \times 28 \times 28$ (MNIST) \\
Number of classes & $N_{\text{cls}}$ & 33 (Radar), 10 (MNIST) \\
Teacher model backbone & - & LeNet \\
Student model backbone & - & Light-CNN \\
Teacher feature dimension & $D_t$ & 84 \\
Student feature dimension & $D_s$ & 84 \\
Hint bandwidth budget & $B$ & $2, 4$ bits \\
Codebook size & $K$ & $4, 16$ \\
Hint injection method & - & FiLM \\
Optimizer & - & Adam \\
Total training epochs & $E$ & 40 (Radar), 5 (MNIST) \\
Batch size & $B_{\text{size}}$ & 64 \\
Initial learning rate & $\eta$ & $1 \times 10^{-3}$ \\
Weight decay & $\lambda$ & $5 \times 10^{-4}$ \\
Gumbel-Softmax temperature & $\tau$ & $1.0$ \\
\bottomrule
\end{tabular}
\end{table}

\begin{table*}[t]
\centering
\caption{COMPREHENSIVE PERFORMANCE COMPARISON ON DEEPSENSE 6G AND MNIST DATASETS.}
\label{tab:main_results}
\scriptsize
\setlength{\tabcolsep}{3pt}
\renewcommand{\arraystretch}{1.05}
\begin{adjustbox}{max width=\textwidth}
\begin{threeparttable}
\begin{tabularx}{\textwidth}{@{} l *{3}{Y} !{\color{black!25}\vrule width 0.45pt} *{2}{Y} !{\color{black!25}\vrule width 0.45pt} *{2}{Y} @{}}
\toprule
\rowcolor{blue!7}[\dimexpr\tabcolsep+0.3pt\relax][0pt]
\textbf{Method} &
\multicolumn{3}{c!{\color{black!25}\vrule width 0.45pt}}{\textbf{Teacher Setup}} &
\multicolumn{2}{c!{\color{black!25}\vrule width 0.45pt}}{\textbf{Cost (bits)}} &
\multicolumn{2}{c}{\textbf{Accuracy (\%)}} \\
\rowcolor{blue!7}[\dimexpr\tabcolsep+0.3pt\relax][0pt]
& \textbf{Obj.} & \textbf{LL} & \textbf{CMI} & \textbf{Train} & \textbf{Infer.} & \textbf{Top-1} & \multicolumn{1}{>{\columncolor{blue!7}[\dimexpr\tabcolsep+0.3pt\relax][-\dimexpr\tabcolsep\relax]}c}{\textbf{Top-5}} \\
\midrule
\multicolumn{8}{@{}l}{\textbf{Scenario 9}\cite{alkhateeb2023deepsense}} \\
\midrule
Cloud & MLL & -1.432 & 0.195 & \NA & \NA & \textbf{55.45} & 97.49 \\
Cloud & MCMI & -1.248 & 0.363 & \NA & \NA & 51.65 & \textbf{98.43} \\
\addlinespace[1pt]
Offline & \NA & \NA & \NA & 0 & 0 & 47.12 & 96.93 \\
\addlinespace[1pt]
Feature KD \cite{romero2015fitnets} & MLL & -1.430 & 0.200 & 2688 & 2688 & 46.06 & 97.49 \\
Feature KD \cite{romero2015fitnets} & MCMI & -1.250 & 0.360 & 2688 & 2688 & 45.05 & 97.43 \\
Logits KD \cite{hinton2015distilling} & MLL & -1.430 & 0.200 & 1056 & 1056 & 48.02 & \textbf{97.71} \\
Logits KD \cite{hinton2015distilling} & MCMI & -1.250 & 0.360 & 1056 & 1056 & 47.50 & \textbf{97.71} \\
\rowcolor{black!5}[\dimexpr\tabcolsep+0.3pt\relax][\dimexpr\tabcolsep+0.3pt\relax]
HintKD & MLL & -1.430 & 0.200 & 512 & 4 & 46.95 & 97.43 \\
\rowcolor{black!5}[\dimexpr\tabcolsep+0.3pt\relax][\dimexpr\tabcolsep+0.3pt\relax]
HintKD & MCMI  & -1.250 & 0.360 & 512 & 4 & \textbf{49.53} & 97.65 \\
\rowcolor{black!5}[\dimexpr\tabcolsep+0.3pt\relax][\dimexpr\tabcolsep+0.3pt\relax]
HintKD & MCMI  & -1.250 & 0.360 & 128 & 2 & 45.44 & 96.76 \\
\rowcolor{black!5}[\dimexpr\tabcolsep+0.3pt\relax][\dimexpr\tabcolsep+0.3pt\relax]
HintKD & MCMI & -1.250 & 0.360 & 2048 & 6 & 47.07 & 97.26 \\
\midrule
\multicolumn{8}{@{}l}{\textbf{MNIST}\cite{lecun1998gradient}} \\
\midrule
Cloud & MLL & -0.010 & 0.090 & \NA & \NA & \textbf{99.62} & \textbf{100.00} \\
Cloud & MCMI & -0.070 & 0.820 & \NA & \NA & 99.14 & \textbf{100.00} \\
\addlinespace[1pt]
Offline & \NA & \NA & \NA & 0 & 0 & 99.05 & 99.85 \\
\addlinespace[1pt]
Feature KD \cite{romero2015fitnets} & MLL & -0.010 & 0.090 & 2688 & 2688 & 98.95 & \textbf{99.99} \\
Feature KD \cite{romero2015fitnets} & MCMI & -0.070 & 0.820 & 2688 & 2688 & 98.88 & \textbf{99.99} \\
Logits KD \cite{hinton2015distilling} & MLL & -0.010 & 0.090 & 320 & 320 & 98.95 & 99.98 \\
Logits KD \cite{hinton2015distilling} & MCMI & -0.070 & 0.820 & 320 & 320 & 98.36 & \textbf{99.99} \\
\rowcolor{black!5}[\dimexpr\tabcolsep+0.3pt\relax][\dimexpr\tabcolsep+0.3pt\relax]
HintKD & MLL & -0.010 & 0.090 & 512 & 4 & \textbf{99.17} & 99.97 \\
\rowcolor{black!5}[\dimexpr\tabcolsep+0.3pt\relax][\dimexpr\tabcolsep+0.3pt\relax]
HintKD & MCMI & -0.070 & 0.820 & 512 & 4 & 99.06 & 99.97 \\
\rowcolor{black!5}[\dimexpr\tabcolsep+0.3pt\relax][\dimexpr\tabcolsep+0.3pt\relax]
HintKD & MCMI  & -0.070 & 0.820 & 128 & 2 & 99.04 & 99.97 \\
\rowcolor{black!5}[\dimexpr\tabcolsep+0.3pt\relax][\dimexpr\tabcolsep+0.3pt\relax]
HintKD & MCMI & -0.070 & 0.820 & 2048 & 6 & 99.14 & 99.97 \\
\bottomrule
\end{tabularx}
\end{threeparttable}
\end{adjustbox}
\end{table*}
\vspace{-0.2cm}
\section{Simulation Results}
\vspace{-0.2cm}
% Fig.~\ref{fig:mnist_epochs} reports the convergence behavior on MNIST. Since MNIST is much easier than DeepSense beam prediction, all variants quickly approach saturated accuracy, and the gap among different hint settings becomes small. Nevertheless, the proposed method remains stable under low-bit transmission, showing that the discrete-hint design does not degrade performance on simple visual classification tasks.

To evaluate the effectiveness of the proposed HintKD framework, we conduct experiments on both the DeepSense 6G dataset and the MNIST dataset. Among them, DeepSense 6G Scenario 9 is used as the main beam prediction task to verify the effectiveness of the proposed method under bandwidth-constrained cloud-edge inference, while MNIST is further introduced to validate the generality of the proposed distillation method across different data modalities and classification tasks\cite{alkhateeb2023deepsense, lecun1998gradient}. Specifically, Scenario 9 corresponds to a radar-based beam prediction problem, whereas MNIST is a standard handwritten digit classification benchmark. Therefore, the joint evaluation on these two datasets helps demonstrate that HintKD is not limited to a specific sensing modality, but can also maintain effectiveness on conventional image classification tasks. The detailed experimental settings are summarized in Table ~\ref{tab:parameters}, including the evaluated datasets, input dimensions, backbone configurations, hint bandwidth, codebook size, and training hyperparameters.

Fig.~\ref{fig:mll} compares different methods in terms of Accuracy, CMI, and LL. It can be observed that the teacher trained with the MCMI objective consistently achieves a higher CMI value, while the MLL-trained teacher exhibits a lower LL value. This indicates that the two teacher objectives emphasize different representation properties. More importantly, such differences are propagated to the student side and further affect the final distillation performance, showing that the teacher objective plays an important role in the proposed hint-based cloud-edge distillation framework.

Table~\ref{tab:main_results} reports the comprehensive results on DeepSense 6G Scenario 9. The offline student achieves 47.12\% Top-1 accuracy and 96.93\% Top-5 accuracy. Under the default 4-bit setting, HintKD with an MCMI-trained teacher reaches 49.53\% Top-1 accuracy and 97.65\% Top-5 accuracy, which improves the Top-1 accuracy by 2.41 percentage points over the offline baseline. In contrast, the 4-bit HintKD variant with an MLL-trained teacher achieves 46.95\% Top-1 accuracy, which is lower than the MCMI-based counterpart. This shows that, for Scenario 9, introducing MCMI into teacher training can provide more informative guidance for the student and thus lead to better final performance.

We further analyze the effect of hint bandwidth on Scenario 9. When the hint bandwidth is reduced from 4 bits to 2 bits, the Top-1 accuracy decreases to 45.44\%, indicating that an overly small codebook cannot preserve sufficient discriminative information. When the bandwidth is increased from 4 bits to 6 bits, the Top-1 accuracy becomes 47.07\%, which still does not surpass the 4-bit setting. Therefore, the 4-bit configuration provides the best trade-off between compact representation and predictive performance on Scenario 9.

In addition to predictive accuracy, HintKD also exhibits clear communication advantages. On Scenario 9, Feature KD and Logits KD require 2688 bits and 1056 bits per sample during inference, respectively, whereas HintKD only needs 4 bits in the default setting. This corresponds to compression ratios of 672× and 264× relative to Feature KD and Logits KD, respectively. Even during training, the transmission cost of HintKD remains lower than that of Feature KD. These results show that the proposed method can significantly reduce communication overhead while preserving competitive beam prediction performance, which is particularly suitable for bandwidth-constrained cloud-edge inference scenarios. The communication overhead comparison is also illustrated in Fig. 3.

To further verify the generality of the proposed distillation framework, we additionally conduct experiments on the MNIST dataset. Compared with the radar-based beam prediction task in Scenario 9, MNIST is a conventional handwritten digit recognition task, which differs significantly in data form, semantic structure, and application background. Therefore, the MNIST results are used to demonstrate that HintKD is not only effective for wireless sensing aided beam prediction, but also applicable to standard image classification tasks. As shown in Table~\ref{tab:main_results}, the offline student on MNIST achieves 99.05\% Top-1 accuracy. Under the 4-bit setting, HintKD with an MLL-trained teacher reaches 99.17\% Top-1 accuracy, while the MCMI-based HintKD reaches 99.06\%. At the same time, the Top-5 accuracy remains stable at 99.97\%, indicating that the proposed method can maintain highly competitive classification performance even with an extremely low-bandwidth hint transmission mechanism.
\begin{figure}[t]
     \centering     
     \includegraphics[width=1\linewidth]{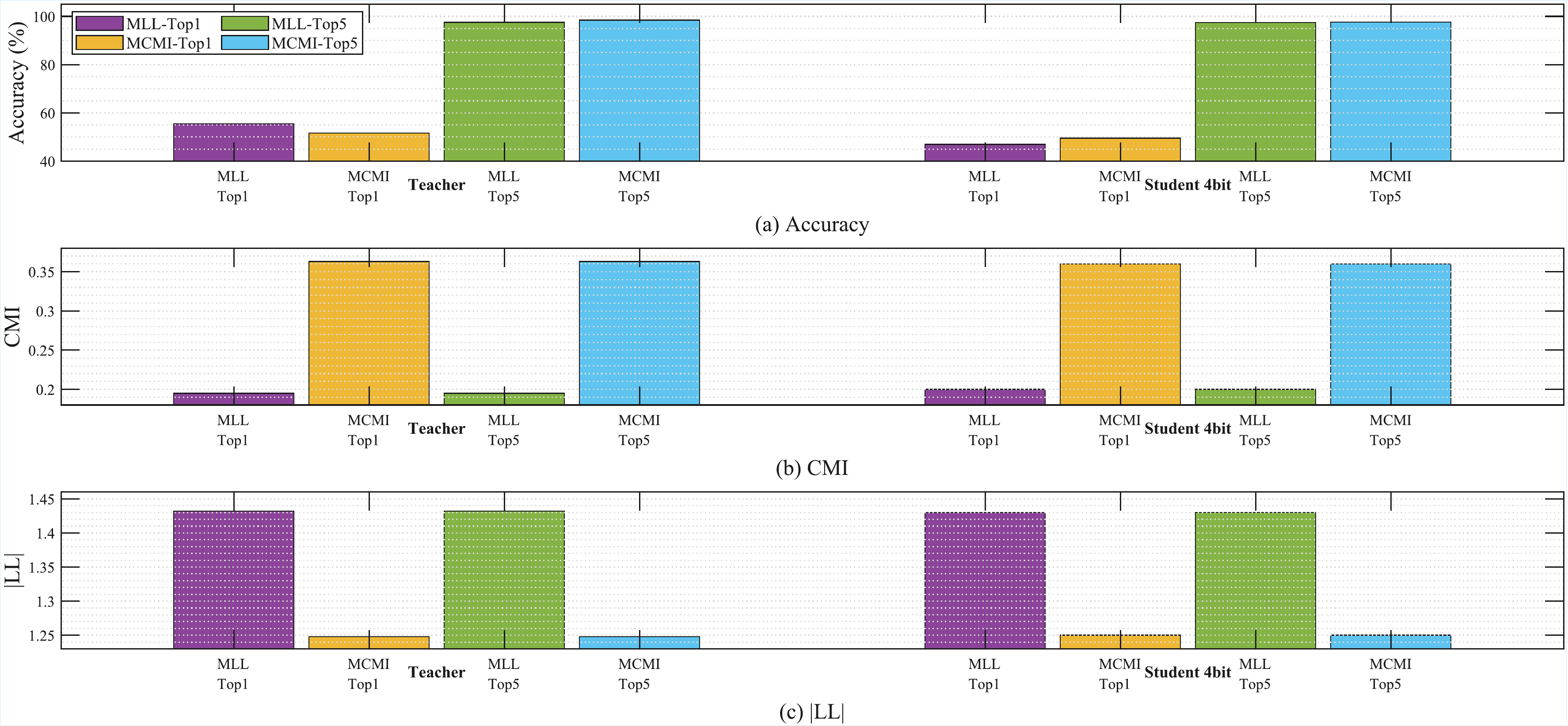}
     \caption{The effect of methods on accuracy, CMI and LL.}
     \label{fig:mll}
\end{figure}
\begin{figure}[t]
    \centering
    \includegraphics[width=1\linewidth]{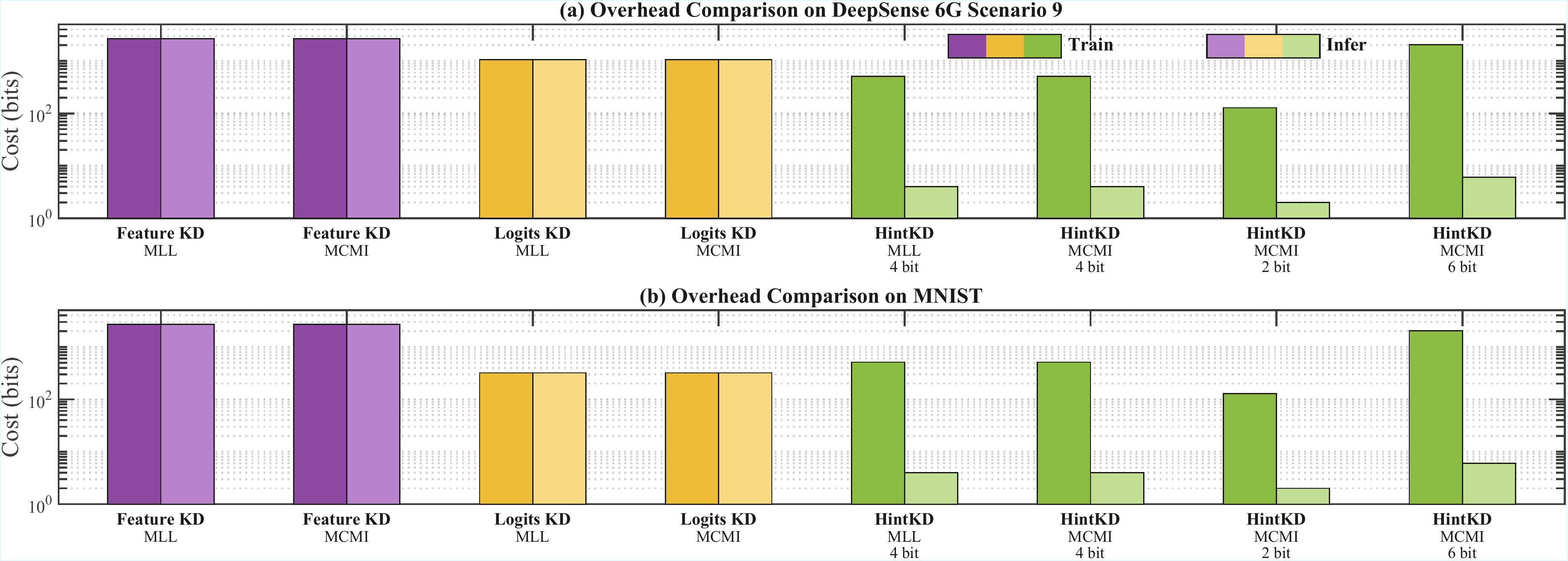} % 替换为你的 Top-1 图片文件名
    \caption{Overhead comparison on Deepsense 6G Scenario9 and MNIST.}
    \label{fig:cost}
\end{figure}
From the perspective of communication overhead, the advantage of HintKD is still evident on MNIST. In the 4-bit setting, its training-side cost is 512 bits and inference-side cost is only 4 bits, which is far lower than Feature KD and also much lower than the inference overhead of Logits KD. Combined with the results on Scenario 9, these experiments show that HintKD consistently achieves a favorable accuracy-bandwidth trade-off across different datasets and tasks. Therefore, the MNIST experiment serves as an important supplementary validation, confirming the universality and cross-task applicability of the proposed distillation method rather than limiting the conclusion to a single beam prediction scenario.

Overall, the results of Fig.~\ref{fig:mll}, Table~\ref{tab:main_results}, and Fig.~\ref{fig:cost} consistently demonstrate that the proposed HintKD framework can effectively compress teacher guidance into low-bit discrete hints and transmit them efficiently in cloud-edge inference. On the DeepSense 6G Scenario 9 task, the proposed method achieves a strong balance between predictive accuracy and bandwidth cost, while on MNIST it further verifies the generality of the framework on standard image classification benchmarks. These results confirm that HintKD is a promising solution for bandwidth-constrained collaborative intelligence systems.
\vspace{-0.2cm}
\section{Conclusion}
\vspace{-0.1cm}
This paper proposed HintKD, a cloud-edge distillation framework that replaces high-dimensional teacher transmissions with compact discrete hints. By combining an MCMI-trained teacher, differentiable codebook learning, and FiLM-based hint adaptation on the UE, the proposed method converts remote teacher guidance into a low-bit control signal. Experiments on DeepSense 6G 
and MNIST 
show that HintKD preserves competitive accuracy while reducing the inference-side payload to only a few bits per sample. These results suggest that, for bandwidth-constrained edge intelligence, transmitting structured guidance can be considerably more efficient than transmitting raw features or full logit vectors. Future work will investigate adaptive codebooks for time-varying channels and extend the framework to broader multimodal sensing scenarios.
\bibliographystyle{IEEEtran}
\bibliography{main}

\end{document}